\newtheorem{theorem}{Theorem}
\newtheorem{lemma}[theorem]{Lemma}
\newtheorem{definition}{Definition}
\newcommand{\R}{\mathbb{R}}
\newcommand{\N}{\mathbb{N}}
\title{Computing Enclosing Depth\footnote{This work is based on the Master thesis of the second author.}}
\date{}
\begin{document}
\author[1]{Bernd G\"{a}rtner}
\author[2]{Fatime Rasiti}
\author[3]{Patrick Schnider}
\affil[1]{Department of Computer Science, ETH Z\"{u}rich\\
  \texttt{gaertner@inf.ethz.ch}}
\affil[2]{Department of Mathematics, ETH Z\"{u}rich\\
  \texttt{frasiti@student.ethz.ch}}
\affil[3]{Department of Computer Science, ETH Z\"{u}rich\\
  \texttt{patrick.schnider@inf.ethz.ch}}

\maketitle

\begin{abstract}
Enclosing depth is a recently introduced depth measure which gives a lower bound to many depth measures studied in the literature. So far, enclosing depth has only been studied from a combinatorial perspective. In this work, we give the first algorithms to compute the enclosing depth of a query point with respect to a data point set in any dimension. In the plane we are able to optimize the algorithm to get a runtime of $O(n\log n)$. In constant dimension, our algorithms still run in polynomial time.
\end{abstract}

\section{Introduction}

Medians play an important role in statistics. In contrast to the mean value of some given data, the median depends only on the order of the data points and not on their exact positions. Hence, it is robust against outliers. As data sets are multidimensional in many cases, we are interested in an extension of the term 'median' to higher dimensions. Since there is no clear order of the data points, there are various generalizations of the median to higher dimensions \cite{aloupis, liu, moser}. In order to define the median of some data, the notion of depth of a query point has been introduced. A median is then a query point with the highest depth. Many depth measures only depend on the relative positions of the data points, just like the median, making them again robust against outliers.

After the first depth measure was introduced by Tukey \cite{tukey} (and is therefore known as Tukey depth), Donoho and Gasko \cite{donoho} established the idea of a multidimensional median as a deepest point relative to the data points. Various depth measures with different properties have since been introduced, such as simplicial depth \cite{regina} and convex hull peeling depth \cite{aloupis}.

Depth measures are an important tool in Computer Science for example in geometric matching, pattern matching, clustering \cite{andreas, rebecka, ida} and shape fitting applications \cite{aiger}. Since depth measures give a way to compute medians of data points they also find applications in Statistics such as data visualization \cite{tukey} and regression analysis \cite{hubert, hyperplane_depth}.

\begin{definition}[Depth measure]
Let $d \in \N$ and $(\R^d)^S$ be the family of all finite point sets in $\R^d$. A depth measure is a function $D: (\R^d)^S\times \R^d \rightarrow \R_{\geq 0}$, $(S,q) \mapsto D(S,q)$. In particular, the function $D$ assigns to a given finite point set $S$ and a query point $q$ a value, which describes how deep the query point $q$ lies within the data set $S$.
\end{definition}

Assume we are given a data set $S$. Consider all hyperplanes spanned by the points of S. This arrangement A of hyperplanes divides $\R^d$ into connected components of $\R^d\backslash A$. We call these connected components cells. A depth measure where all points in a cell have the same depth is called \emph{combinatorial}.

Aloupis et al. \cite{aloupis} used the fact that simplicial depth is a combinatorial depth measure to compute the simplicial median for $d=2$ in $O(n^4)$ time. Sachini \cite{sachini} modified this algorithm to compute the simplicial depth for the whole plane in $O(n^4)$ time. For the case $d=3$ there are various algorithms that compute the simplicial depth of a single query point in $O(n^2)$ \cite{cheng}. Cheng and Ouyang \cite{cheng} discussed an extension of this algorithm to compute simplicial depth in $\R^4$ in $O(n^4)$ time. Afshani et al. \cite{afshani} later introduced methods to compute simplicial depth in $O(n^d \log n)$ time for $d>4$. This bound was improved by Pilz et al. \cite{pilz} to $O(n^{d-1})$.

Another well studied combinatorial depth measure is Tukey depth, also known as halfspace depth.
For the case $d=2$, Aloupis et al. \cite{aloupis2002lower} gave a worst case lower bound of $\Omega(n \log n)$ for computing the Tukey depth of an arbitrary query point $q$ with respect to a given point set $S$ of size $n$. In fact, the Tukey depth of a query point relative to a point set of size $n$ can be computed in $O(n\log n)$ time \cite{ruts}. There are different approaches to compute the Tukey depth in different dimensions \cite{chen, bremer, peter}. The algorithm of Rousseeuw et al. \cite{peter} to compute the Tukey depth of a query point in $\R^d$ for $d>2$ has a run time of $O(n^{d-1} \log n)$.

Studying more general families of combinatorial depth measures, Schnider introduced the notion of \emph{enclosing depth}, which turns out to be a natural lower bounds for many combinatorial depth measures \cite{patrick}. In this work, we will focus on enclosing depth and provide algorithms to compute it.

\begin{definition}[k-enclosing]
Let $S$ be a point set of size $(d+1)k$ in $\mathbb{R}^d$ and $q$ a query point. If $S$ can be partitioned into $d+1$ pairwise disjoint subsets $S_1,...,S_{d+1}$, each of size $k$, such that for any transversal $p_1\in S_1,...,p_{d+1}\in S_{d+1}$ the point $q$ lies in the convex hull of $p_1,...,p_{d+1}$, then we say that $S_1,...,S_{d+1}$  $k$-encloses the point $q$. (Figure \ref{fig:transversals})
\begin{figure}[ht]
    \centering
    \includegraphics[width=0.6\textwidth]{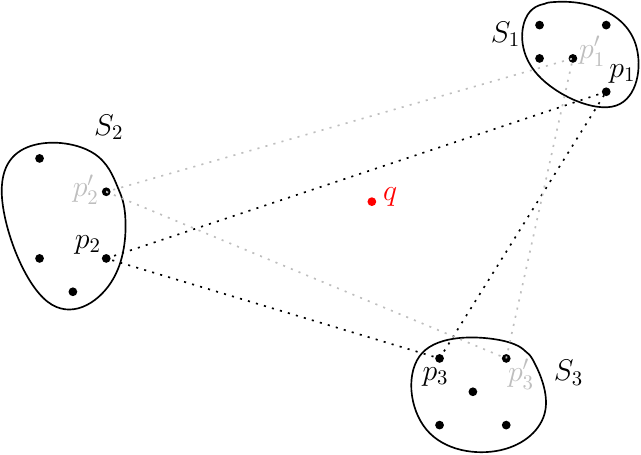}
    \caption{\label{fig:transversals} The set $S = S_1 \cup S_2 \cup S_3$ 5-encloses the query point $q$ in $\mathbb{R}^2$.}
    \end{figure}
\end{definition}

\begin{definition}[Enclosing Depth]
Let $S$ be a finite point set in $\mathbb{R}^d$ and $q$ be a query point. The enclosing depth of $q$ with respect to $S$ is the maximum $k$ such that there exist subsets $S_1,...,S_{d+1}$ of $S$ which $k$-enclose $q$. We denote it by $ED(S,q)$. (Figure \ref{fig:enclosingdepth})
\end{definition}

\begin{figure}[ht]
    \centering
    \includegraphics[width=0.6\textwidth]{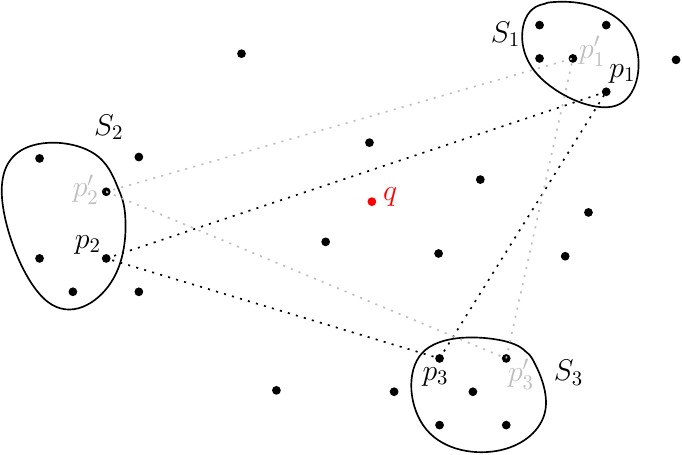}
    \caption{\label{fig:enclosingdepth} The enclosing depth of the query point $q$ is at least 5.}
    \end{figure}

In this work, we present algorithms to compute the enclosing depth of a query point $q$ with respect to a data set $S$ in $\R^2$ (Section \ref{sec:plane}) as well as in general dimension (Section \ref{sec:higherd}).

\section{The planar case}\label{sec:plane}

Before describing our algorithm, we introduce some combinatorial lemmas that will be helpful in proving the correctness of our algorithm. For these lemmas, we will assume that the query point $q$ is the origin and that the data point set $S$ lies on the unit circle. Combinatorially, this is not a restriction, as the following lemma shows.

\begin{lemma}\label{lem:circle}
Let $S=\{s_1,\ldots,s_n\}\subset\R^2$ be a data point set and $q\in\R^2$ a query point such that $S\cup\{q\}$ is in general position. Denote by $S'=\{s'_1,\ldots,s'_n\}$ the point set defined by centrally projecting each point in $S$ to a circle of unit radius with center $q$. Then $q$ is in the convex hull of $s_i,s_j,s_k$ if and only if it is in the convex hull of $s'_i,s'_j,s'_k$.
\end{lemma}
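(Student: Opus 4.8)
The plan is to translate everything so that the query point sits at the origin and then recognize central projection as a positive rescaling of each data point, under which membership of the origin in a convex hull is invariant. After translating by $-q$ we may assume $q=0$; this changes neither convex-hull membership nor the directions in which the data points are seen, so it costs nothing. General position guarantees $q\notin S$, so each projected point is well defined and equals $s'_\ell=\lambda_\ell s_\ell$ with $\lambda_\ell=1/\|s_\ell\|>0$; that is, central projection simply normalizes each vector without changing its direction.

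The core step I would isolate is the following scaling invariance: for any three vectors $a,b,c$ and any positive scalars $\lambda_a,\lambda_b,\lambda_c$, the origin lies in $\mathrm{conv}(a,b,c)$ if and only if it lies in $\mathrm{conv}(\lambda_a a,\lambda_b b,\lambda_c c)$. To prove the forward direction I would start from a convex combination $\alpha a+\beta b+\gamma c=0$ with $\alpha,\beta,\gamma\ge 0$ and $\alpha+\beta+\gamma=1$, and then define new coefficients proportional to $\alpha/\lambda_a$, $\beta/\lambda_b$, $\gamma/\lambda_c$, normalized to sum to one. These are nonnegative, their sum is one by construction, and substituting $\lambda_a a$, $\lambda_b b$, $\lambda_c c$ cancels the scalars, yielding a convex combination of the scaled points equal to the origin. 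The normalizing denominator is strictly positive because the original coefficients are not all zero, so no degeneracy arises. The reverse direction is identical with $\lambda_a,\lambda_b,\lambda_c$ replaced by their reciprocals, so the equivalence is symmetric. Applying this with $(a,b,c)=(s_i,s_j,s_k)$ and $\lambda_\ell=1/\|s_\ell\|$ gives exactly the claimed statement.

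A cleaner geometric way to see the same fact, which I would mention as intuition, is that the origin lies in the triangle on $a,b,c$ precisely when the three directions $a/\|a\|,b/\|b\|,c/\|c\|$ do not all lie in a common open halfplane bounded by a line through the origin; since central projection alters only lengths and not these directions, the condition is preserved verbatim.

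I expect the argument to be routine rather than difficult; the only point that needs care is the role of general position. It is used twice: to ensure that no data point coincides with $q$, so that the projection is defined and each $\lambda_\ell$ is finite and positive, and to guarantee that $q$ never lies on an edge of a triangle $s_is_js_k$, so that we only have to distinguish \emph{inside} from \emph{outside} and need not separately track the boundary case. With these degeneracies excluded, the scaling-invariance computation above closes the proof.
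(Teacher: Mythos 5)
Your proof is correct, but it takes a genuinely different route from the paper's. The paper argues by contraposition via separation: $q\notin\mathrm{conv}(a,b,c)$ if and only if there is a line $\ell$ through $q$ having all of $a,b,c$ on the same side, and this condition is invariant under central projection from $q$, since projection moves each point along its ray from $q$ --- a two-line proof. You instead translate $q$ to the origin and prove an algebraic scaling-invariance fact: for positive scalars $\lambda_a,\lambda_b,\lambda_c$, the origin lies in $\mathrm{conv}(a,b,c)$ if and only if it lies in $\mathrm{conv}(\lambda_a a,\lambda_b b,\lambda_c c)$, verified by rescaling the convex coefficients to $\alpha/\lambda_a,\beta/\lambda_b,\gamma/\lambda_c$ and renormalizing; the computation is valid (the new coefficients are nonnegative, sum to one, the normalizer is strictly positive, and the scalars cancel), and the reverse direction via reciprocals is fine. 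Your approach buys self-containedness (no appeal to a separating-line argument) and it generalizes verbatim to any dimension and any number of points, which is in the spirit of how the lemma is actually used. One small inaccuracy in your closing discussion: general position is \emph{not} needed to handle the boundary case --- your convex-combination argument works for closed hulls, boundary included --- it is only needed to ensure $q\notin S$ so that the projection and the scalars $\lambda_\ell=1/\|s_\ell-q\|$ are well defined. The ``cleaner geometric way'' you mention as intuition (the three directions not all lying in an open halfplane bounded by a line through the origin) is essentially the paper's proof, so you have in effect found both arguments.
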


\begin{proof}
Assume that $q$ is not in the convex hull of $a,b,c$. Then there is a line $\ell$ through $q$ having all of $a,b,c$ on the same side. This is invariant under central projection from $q$.
\end{proof}

Let now $S=\{s_1,\ldots,s_n\}$ on the unit circle be ordered in counter-clockwise direction. By an \emph{interval} $[s_a,s_b]$ we denote all the points in $S$ that lie between $s_a$ and $s_b$, that is,

\[
[s_a,s_b] = \left\{
\begin{array}{ll}
\{s\in S\mid s_a\leq s\leq s_b\} & s_a\leq s_b \\
\{s\in S\mid s\geq s_a \text{ or } s\leq s_b\} & \, s_a>s_b \\
\end{array}
\right.
\]

In the following, we write indices modulo $n$, that is, $s_i=s_{i-n}$ for $i\geq n$.
We show that in order to find $k$-enclosing sets we can restrict our attention to intervals and their endpoints.

\begin{lemma}\label{lem:intervals}
Let $a_1,a_2,b_1,b_2,c_1,c_2\in S$ such that the intervals $[a_1,a_2]$, $[b_1,b_2]$ and $[c_1,c_2]$ are pairwise disjoint and for every choice of $a\in\{a_1,a_2\}$, $b\in\{b_1,b_2\}$, $c\in\{c_1,c_2\}$ the origin lies in the convex hull of $a,b,c$. Then for every choice $a\in [a_1,a_2]$, $b\in [b_1,b_2]$, $c\in [c_1,c_2]$ the origin lies in the convex hull of $a,b,c$.
\end{lemma}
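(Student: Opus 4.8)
The plan is to translate the geometric condition into a purely combinatorial statement about circular arcs and then exploit a monotonicity property, so that the worst case always occurs at the interval endpoints. Since all points lie on the unit circle centred at the origin $q$ and $S\cup\{q\}$ is in general position, the origin lies in the convex hull of three circle points if and only if these three points are not contained in any closed half-plane through the origin; equivalently, writing them in counter-clockwise cyclic order, all three \emph{circular gaps} between consecutive points are strictly less than $\pi$. I would first record this characterisation as the basic tool, replacing every occurrence of ``$q\in\mathrm{conv}$'' by ``every gap is $<\pi$''.

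Next I would fix the combinatorial configuration. Because the three intervals are pairwise disjoint arcs and the statement is symmetric under permuting $\{a,b,c\}$, I may assume the blocks appear in counter-clockwise order $[a_1,a_2],[b_1,b_2],[c_1,c_2]$, so that the six endpoints occur cyclically as $a_1,a_2,b_1,b_2,c_1,c_2$; in particular $a_1$ (resp.\ $b_1,c_1$) is the counter-clockwise-first point of its block. Any transversal $a\in[a_1,a_2]$, $b\in[b_1,b_2]$, $c\in[c_1,c_2]$ then inherits this cyclic order, and the three gaps $g_{ab},g_{bc},g_{ca}$ (the counter-clockwise arcs from $a$ to $b$, from $b$ to $c$, and from $c$ to $a$) are well defined and sum to $2\pi$.

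The key step is a monotonicity observation: each gap depends only on two of the three points and is monotone in their angular positions. Hence $g_{ab}$ is maximised over the product of the three intervals at $a=a_1,b=b_2$, the gap $g_{bc}$ at $b=b_1,c=c_2$, and $g_{ca}$ at $c=c_1,a=a_2$. Consequently ``every gap is $<\pi$ for every transversal'' reduces to just three endpoint inequalities. Finally I would match each maximal gap to one of the eight hypothesised endpoint triangles: $g_{ab}(a_1,b_2)$ is one of the three gaps of the triangle $(a_1,b_2,c_1)$, $g_{bc}(b_1,c_2)$ is a gap of $(a_2,b_1,c_2)$, and $g_{ca}(c_1,a_2)$ is a gap of $(a_2,b_1,c_1)$; since the origin lies in each of these endpoint triangles by hypothesis, each of these gaps is $<\pi$, which closes the argument.

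The main obstacle I expect is the bookkeeping around \emph{which} of the two circular arcs between two endpoints is the relevant gap. The half-plane characterisation is symmetric, but ``maximising a gap'' only makes sense once the cyclic order of $a,b,c$ is fixed, and this is exactly where pairwise disjointness of the intervals is essential: it guarantees that every transversal realises the same cyclic order $a,b,c$ and that the maximising endpoint configurations coincide with genuine internal gaps of hypothesised triangles rather than their long complementary arcs. I would therefore carefully verify the cyclic vertex order of each of the three endpoint triangles used above, so that the arc identified is indeed an internal gap of that triangle and hence directly controlled by the hypothesis.
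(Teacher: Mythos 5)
Your proof is correct, but it takes a genuinely different route from the paper's. The paper argues by contradiction with a separating line: if some transversal triangle $a,b,c$ missed the origin, all three points would lie in an open half-plane $\ell^+$ bounded by a line through the origin; pairwise disjointness of the intervals then forces an endpoint of $[a_1,a_2]$ to lie in $\ell^+$ as well (otherwise the arc $[a_1,a_2]$ would contain the whole semicircle and hence $b$ and $c$), so $a$ can be replaced by an endpoint while keeping the origin outside, and repeating this for $b$ and $c$ turns the bad transversal into a bad endpoint triangle, contradicting the hypothesis. Your argument is instead direct and quantitative: you characterize containment of the origin by all three angular gaps being less than $\pi$, exploit that each gap is monotone in the two points defining it so that it is maximized at a specific pair of endpoints, and then bound each of the three maximal gaps by a gap of one hypothesized endpoint triangle; your identifications of $g_{ab}(a_1,b_2)$, $g_{bc}(b_1,c_2)$, $g_{ca}(c_1,a_2)$ as internal gaps of $(a_1,b_2,c_1)$, $(a_2,b_1,c_2)$, $(a_2,b_1,c_1)$ are all consistent with the cyclic order $a_1,a_2,b_1,b_2,c_1,c_2$, which disjointness does guarantee. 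What your approach buys: it is a direct proof rather than a contradiction, and it shows the conclusion already follows from just three of the eight endpoint triangles, a slight strengthening of the lemma (and of the subroutine's check in the algorithm). What the paper's approach buys: it uses only separation by lines through $q$ and no angular metric at all, so the one-point-at-a-time replacement argument is more robust and closer in spirit to the halfspace-based reasoning the paper develops for higher dimensions, where gap bookkeeping on a sphere would become awkward.
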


\begin{proof}
Assume for the sake of contradiction that the convex hull of $a,b,c$ does not contain the origin and let $\ell$ be a line through the origin that has all of $a,b,c$ on the same side $\ell^+$. As the intervals are pairwise disjoint, one of $a_1$ or $a_2$ must also be in $\ell^+$. Thus the convex hull of $a_i,b,c$ does not contain the origin for some $i\in\{1,2\}$. Repeating this argument two more times, we find that one of the 8 triangles spanned by the endpoints does not contain the origin, which is a contradiction to the assumptions of the lemma.
\end{proof}

We want to restrict the considered intervals even further. To this end, we define for each point $s\in S$ its \emph{opposite neighbors} $s^{(r)}$ and $s^{(\ell)}$ as the last point in $S$ before $-s$ and the first point in $S$ after $-s$, respectively, see Figure \ref{fig:opposite_nbs} for an illustration.

\begin{figure}[ht]
    \centering
    \includegraphics[width=0.6\textwidth]{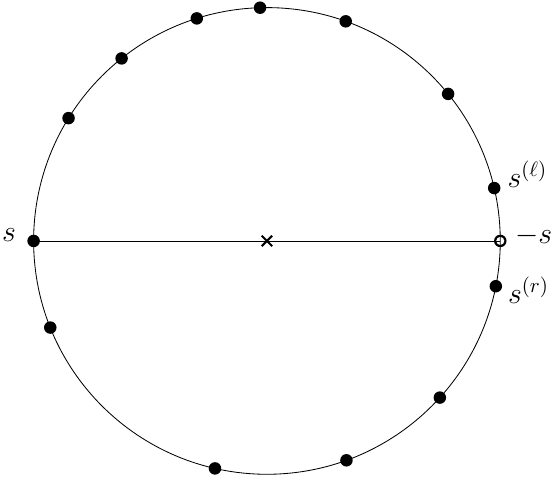}
    \caption{The opposite neighbors $s^{(r)}$ and $s^{(\ell)}$ of a point $s\in S$.}
    \label{fig:opposite_nbs}
\end{figure}

\begin{lemma}\label{lem:nbs}
Let $S_1=[s_i,s_{i+k}]$, $S_2$ and $S_3$ be subsets of $S$ that $(k+1)$-enclose the origin. Denote $s_j=s_i^{(r)}$ and $s_{m}=s_{i+k}^{(\ell)}$. Then $S_1$, $S'_2:=[s_{j-k},s_j]$ and $S'_3:=[s_{m},s_{m+k}]$ also $(k+1)$-enclose the origin.
\end{lemma}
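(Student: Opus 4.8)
The plan is to translate the enclosing condition into a statement about arcs and antipodes on the unit circle, and then to show that the canonical intervals $S_2'$ and $S_3'$ satisfy a single binding inequality that is already forced by the original configuration. Throughout I keep $q$ at the origin and use that three points on the circle have the origin in their convex hull if and only if they do not all lie in an open half-circle; equivalently, the origin lies in $\triangle(a,b,c)$ exactly when $a$ lies in the antipode of the shorter of the two arcs determined by $b$ and $c$ (and symmetrically for $b$ and $c$). General position guarantees that all these incidences are strict, so there is never an ambiguity about which arc is shorter and the origin never falls on a triangle boundary.

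The first step is to extract the structural consequences of $S_1=[s_i,s_{i+k}]$ being an interval. Let $A_1$ denote the antipodal arc from $-s_i$ to $-s_{i+k}$ (it has the same angular width as $S_1$). For a fixed pair $b,c$, the origin lies in $\triangle(a,b,c)$ for \emph{every} $a\in S_1$ if and only if $A_1$ is contained in the shorter arc determined by $b$ and $c$. This containment forces $b$ and $c$ onto opposite sides of $A_1$: one into the arc $P$ running from $s_{i+k}$ to $-s_i$, the other into the arc $Q$ running from $-s_{i+k}$ to $s_i$ (neither can sit inside $A_1$, since a shorter arc having an endpoint inside $A_1$ cannot cover $A_1$). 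A short consistency check, using that any two points of $P$ (or of $Q$) span a sub-arc of length less than $\pi$ that misses $A_1$, then shows that one of $S_2,S_3$ lies entirely in $P$ and the other entirely in $Q$; relabelling if necessary I assume $S_2\subseteq P$ and $S_3\subseteq Q$. Under this separation the remaining requirement, that the arc through $A_1$ really be the shorter one, reduces for each pair to the single angular inequality that the counterclockwise arc from $b$ to $c$ has length less than $\pi$. Its most restrictive instance is the extremal pair $b^\ast\in S_2$ closest to $s_{i+k}$ and $c^\ast\in S_3$ closest to $s_i$, so the original enclosing assumption is equivalent to this inequality holding for $b^\ast$ and $c^\ast$.

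The last step is to compare $S_2',S_3'$ with $S_2,S_3$. By construction $S_2'=[s_{j-k},s_j]$ consists of the $k+1$ points immediately preceding $-s_i$; since $P$ already contains the $k+1$ points of $S_2$, these $k+1$ points all lie in $P$, and their smallest-angle element $s_{j-k}$ has angle at least that of $b^\ast$ (a counting argument: at most $k$ points of $S_2$ can exceed the $(k+1)$-st largest element of $P$). Symmetrically $S_3'=[s_m,s_{m+k}]\subseteq Q$, and its largest-angle element $s_{m+k}$ has angle at most that of $c^\ast$. In particular $S_1,S_2',S_3'$ are pairwise disjoint intervals. Chaining the two monotonicity bounds with the binding inequality for $b^\ast,c^\ast$ yields the same inequality for the pair $s_{j-k},s_{m+k}$, which is precisely the most restrictive enclosing constraint for $S_1,S_2',S_3'$. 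Since all three sets are intervals, Lemma~\ref{lem:intervals} reduces enclosing to the endpoint transversals, and these are exactly the instances controlled above; hence $S_1,S_2',S_3'$ $(k+1)$-enclose the origin.

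The main obstacle is the structural normal form in the second step: correctly reducing the fully quantified enclosing condition for the whole interval $S_1$ against \emph{arbitrary} sets $S_2,S_3$ to the containment of $A_1$ in the shorter arc, and from there isolating a single binding angular inequality. Once this normal form is established, comparing $S_2',S_3'$ with $S_2,S_3$ is a routine extremal counting argument, and general position eliminates all boundary cases.
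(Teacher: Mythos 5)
Your proof is correct and, at the structural level, follows the same skeleton as the paper's: both arguments first show that $S_2$ and $S_3$ must be separated into the two arcs flanking the antipodal arc of $S_1$ (the paper derives this from the lines through $s_i$, resp.\ $s_{i+k}$, and the origin), both then isolate the extremal points of $S_2$ and $S_3$ (the paper's $s_b$ and $s_c$), and both close with the same counting bounds $b\leq j-k$ and $m+k\leq c$. The difference lies in how the final enclosing property is verified. The paper checks the eight endpoint triangles of the relevant intervals one by one, each via a separating-line contradiction, and then invokes Lemma~\ref{lem:intervals}; you instead compress the whole enclosing condition (after separation) into a single monotone angular inequality --- the counterclockwise arc from the first point of the $P$-side set to the last point of the $Q$-side set must be shorter than $\pi$ --- and transfer this inequality from $(b^\ast,c^\ast)$ to $(s_{j-k},s_{m+k})$. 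This makes the monotonicity that the paper exploits implicitly fully explicit and collapses the case analysis into one inequality chain. One imprecision should be patched: the ``only if'' direction of your first-step equivalence is not true for arbitrary $b,c$, because the points of $S_1$ could all lie in a short arc that wraps through a large gap of $S_1$, in which case every triangle contains the origin even though $A_1$ is not contained in the shorter $b$--$c$ arc. In your setting this cannot occur: $b$ and $c$ are points of $S$ outside the interval $S_1$, hence lie in the arc complementary to the span of $S_1$; in the wrap-around scenario that complementary arc would be contained in the antipode of the shorter $b$--$c$ arc, forcing $-b$ onto an arc of length less than $\pi$ having $b$ as an endpoint, which is impossible. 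Adding this one-line justification (which also yields that the span of $S_1$ is less than $\pi$, a fact you use tacitly when speaking of the arcs $P$ and $Q$) makes your argument complete.
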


\begin{proof}
See Figure \ref{fig:nbs_proof} for an illustration of the proof.
We first note that (up to relabeling) we must have that $S_2\subset [s_{i+k+1},s_j]$ and $S_3\subset [s_m,s_{i-1}]$. Indeed, both lines through $s_i$ and the origin, as well as through $s_{i+k}$ and the origin must separate $S_2$ and $S_3$, as otherwise there would be a choice $a\in S_1$, $b\in S_2$, $c\in S_3$ whose convex hull does not contain the origin. Thus, we can write $S_2=[s_b,s_{b+k}]$ for $i+k+1\leq b\leq j-k$ and similarly $S_2=[s_{c-k},s_{c}]$ for $m+k\leq c\leq i-1$. In particular both $s_i,s_b,s_c$ and $s_{i+k},s_b,s_c$ contain the origin.

We claim that $s_i,s_b,s_m$ also contains the origin. Assume it does not. Then either the line $\ell_i$ through $s_i$ and the origin or the line $\ell_b$ through $s_b$ and the origin must separate $s_c$ and $s_m$. By construction, $\ell_i$ has both $s_c$ and $s_m$ on the same side, so it must be $\ell_b$. But this would imply that $m>c$, which is a contradiction. A symmetric argument also shows that $s_{i+1}, s_b,s_m$ contains the origin. Analogously it can be shown that $s_i,s_j,s_c$ as well as $s_{i+1},s_j,s_c$ contains the origin. Finally, by construction $s_i,s_j,s_m$ as well as $s_{i+k},s_j,s_m$ contains the origin. The statement now follows from Lemma \ref{lem:intervals} and the fact that $b\leq j-k$ and $m+k\leq c$.
\end{proof}

\begin{figure}[ht]
    \centering
    \includegraphics[width=0.6\textwidth]{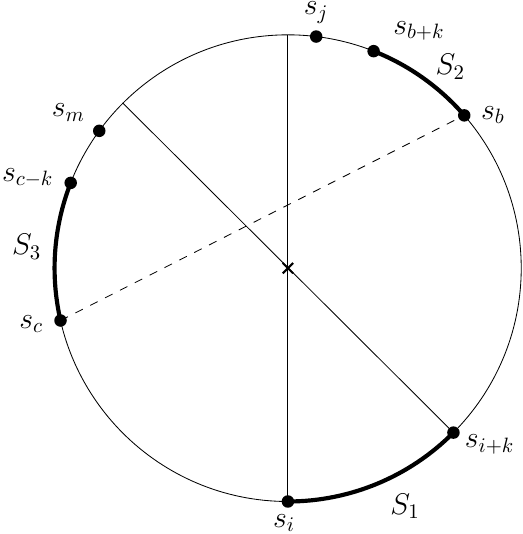}
    \caption{An illustration of the proof of Lemma \ref{fig:nbs_proof}.}
    \label{fig:nbs_proof}
\end{figure}

\textbf{Description of the algorithm:}
We are given a set $S$ of $n$ data points in the plane and a query point $q$. In a pre-processing step we first sort the points radially around $q$, giving a counter-clockwise order $s_1,\ldots,s_n$ on $S$ and then for each $s\in S$ we compute $s^{(r)}$ and $s^{(\ell)}$ using binary search.
For the main part of the algorithm we run the following subroutine, which for a given integer $k$ checks whether there are three sets that $(k+1)$-enclose $q$: for each $s_i\in S$, with $s_j=s_i^{(r)}$ and $s_{m}=s_{i+k}^{(\ell)}$ the subroutine checks whether all 8 triangles $a,b,c$ for $a\in [s_i,s_{i+k}]$, $b\in [s_{j-k},s_j]$, $c\in [s_{m},s_{m+k}]$ contain $q$ and the intervals are pairwise disjoint, returning TRUE if this holds for some $s_i$, and FALSE otherwise. By doing a binary search over the values of $k\in\{0,\ldots,n\}$ we find the largest value $k$ for which the subroutine returns true and return $(k+1)$.

\begin{theorem}
The above algorithm computes the enclosing depth of $q$ with respect to $S\subset\R^2$ in time $O(n\log n)$.
\end{theorem}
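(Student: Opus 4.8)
The plan is to prove the theorem by establishing two things: correctness of the subroutine (that it returns TRUE for $k$ exactly when there exist three sets that $(k+1)$-enclose $q$), and the claimed $O(n\log n)$ running time. Correctness is where the combinatorial lemmas do the heavy lifting. For one direction, if the subroutine finds an $s_i$ for which all 8 endpoint-triangles contain $q$ and the three intervals $[s_i,s_{i+k}]$, $[s_{j-k},s_j]$, $[s_m,s_{m+k}]$ are pairwise disjoint, then Lemma~\ref{lem:intervals} immediately upgrades this to the statement that \emph{every} transversal triangle contains $q$, so these three intervals (each of size $k+1$) $(k+1)$-enclose $q$. For the converse, suppose some three sets $(k+1)$-enclose $q$. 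I would first argue that without loss of generality one of the enclosing sets can be taken to be an interval of the form $[s_i,s_{i+k}]$: since the points lie on the unit circle, any $(k+1)$-enclosing set must in fact be a contiguous interval, because if it had a gap, a point from another enclosing set would lie in that gap and one could find a separating line through $q$. Once one set is a canonical interval $[s_i,s_{i+k}]$, Lemma~\ref{lem:nbs} guarantees that the \emph{specific} intervals $S_2'=[s_{j-k},s_j]$ and $S_3'=[s_m,s_{m+k}]$ that the subroutine tests also $(k+1)$-enclose $q$. Hence the subroutine, iterating over all $s_i$, is guaranteed to detect this $s_i$ and return TRUE.

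Having established that the subroutine is a correct decision procedure for ``enclosing depth $\geq k+1$,'' the monotonicity of enclosing depth makes the outer binary search over $k\in\{0,\ldots,n\}$ valid: if three sets $(k+1)$-enclose $q$ then they certainly $k'$-enclose $q$ for any $k'\le k+1$ by dropping points, so the predicate ``subroutine returns TRUE on $k$'' is monotone in $k$, and binary search correctly locates the largest such $k$.

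For the running time, the pre-processing sorts radially in $O(n\log n)$ and computes each $s^{(r)},s^{(\ell)}$ by binary search in $O(\log n)$, for $O(n\log n)$ total. The outer binary search over $k$ contributes a factor of $O(\log n)$, so the key point is that each call of the subroutine must run in $O(n)$ time. A single $s_i$ requires only looking up the precomputed $s_j=s_i^{(r)}$ and $s_m=s_{i+k}^{(\ell)}$, checking disjointness of three intervals (an $O(1)$ index comparison once the endpoints are known), and testing the $8$ fixed endpoint-triangles for containment of $q$, each of which is $O(1)$; summing over the $n$ choices of $s_i$ gives $O(n)$ per subroutine call. Multiplying by the $O(\log n)$ binary-search steps yields $O(n\log n)$ overall, matching the pre-processing cost.

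I expect the main obstacle to be the converse direction of correctness, specifically the reduction showing that \emph{every} $(k+1)$-enclosing configuration can be normalized to the canonical form the subroutine actually tests. The subroutine only examines $n$ candidate triples (one per $s_i$), built rigidly from opposite neighbors, whereas a priori an enclosing configuration could use arbitrary subsets; the content of Lemma~\ref{lem:nbs} is exactly that no enclosing depth is lost by this restriction, but I would need to carefully verify that the hypotheses of Lemma~\ref{lem:nbs} are met---in particular that the first set genuinely has the interval form $[s_i,s_{i+k}]$ and that the relabeling/orientation choices in that lemma are handled, possibly by iterating over all $s_i$ to absorb the choice of which endpoint plays the role of $s_i$.
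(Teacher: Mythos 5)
Your overall route is the same as the paper's: soundness of the subroutine via Lemma~\ref{lem:intervals}, completeness via Lemma~\ref{lem:nbs}, and the identical runtime accounting (sorting plus $2n$ binary searches for the opposite neighbors, $O(1)$ work per $s_i$ in the subroutine, $O(\log n)$ subroutine calls). You also correctly flag the one step that needs care --- reducing an arbitrary $(k+1)$-enclosing triple to the canonical interval form that Lemma~\ref{lem:nbs} requires --- a step the paper's own proof silently skips by invoking Lemma~\ref{lem:nbs} directly.

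However, your justification of that step contains a genuine error: the claim that \emph{any} $(k+1)$-enclosing set must be a contiguous interval of $S$ is false. Your separating-line argument only rules out a point of \emph{another enclosing set} sitting in a gap; it says nothing about points of $S$ that belong to none of $S_1,S_2,S_3$, and such points can perfectly well lie in a gap. Concretely, take points at angles $0^\circ,5^\circ,10^\circ,120^\circ,130^\circ,240^\circ,250^\circ$, labelled $s_1,\ldots,s_7$, and let $S_1=\{s_1,s_3\}$, $S_2=\{s_4,s_5\}$, $S_3=\{s_6,s_7\}$: every transversal triangle contains the origin, yet $S_1$ is not an interval because the unused point $s_2$ lies between $s_1$ and $s_3$. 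The repair is to \emph{replace} the enclosing sets by intervals rather than to claim they already are intervals: your separating-line argument does show that no point of $S_2\cup S_3$ lies in the arc spanned by $S_1$ (and cyclically), so the three intervals $I_1\supseteq S_1$, $I_2\supseteq S_2$, $I_3\supseteq S_3$ of $S$ spanned by the three sets are pairwise disjoint, and their endpoints belong to $S_1$, $S_2$, $S_3$ respectively. Hence the eight endpoint triangles contain $q$, Lemma~\ref{lem:intervals} upgrades this to all transversals of $I_1,I_2,I_3$, and trimming each $I_i$ down to $k+1$ consecutive points yields an enclosing triple whose first set has the form $[s_i,s_{i+k}]$; only then can Lemma~\ref{lem:nbs} be applied. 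With this substitution your argument is complete; the rest of the proposal (soundness direction, monotonicity justifying the binary search, and the runtime analysis) is correct as written.
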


\begin{proof}
We first show the correctness of the algorithm. It follows from Lemma \ref{lem:intervals} that if the subroutine returns TRUE then the considered intervals are indeed $(k+1)$-enclosing. On the other hand, if there are $(k+1)$-enclosing sets, by Lemma \ref{lem:nbs} the subroutine will find them.

As for the runtime, we can sort in time $O(n\log n)$. After this, we perform $2n$ binary searches, each taking $O(\log n)$ time, thus the total runtime of the pre-processing step is $O(n\log n)$. For the runtime of the subroutine, we notice that for each choice of $s_i$ the required checks can be done in time $O(1)$, so the runtime of the subroutine is $O(n)$. As we call it for $O(\log n)$ many values of $k$ we get the desired runtime.
\end{proof}

\section{Higher dimensions}\label{sec:higherd}

Our algorithm in general dimension is based on the following observation:

\begin{lemma}[Lemma 20 in \cite{patrick}]
\label{lem:halfspaces}
Let $S_1,\ldots, S_{d+1}\subset S\subset\R^d$ be point sets which enclose a point $q$, where $S\cup\{q\}$ is in general position.
Then there are $d+1$ closed halfspaces $H^{-}_1,\ldots,H^{-}_{d+1}$ such that each $H^{-}_i$ contains $q$ on its boundary, $H^{-}_i\cap S=S_i$ for each $i$ and $H^{-}_1\cup\ldots \cup H^{-}_{d+1}=\mathbb{R}^d$.
\end{lemma}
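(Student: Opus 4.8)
The plan is to place $q$ at the origin (a harmless translation, since every condition involved is translation invariant) and to convert the enclosing property into a covering statement on the sphere of directions. First I would record the elementary separation fact: for points $p_1,\dots,p_{d+1}$ in general position, the origin fails to lie in their convex hull if and only if there is a unit vector $v$ with $\langle v,p_\ell\rangle>0$ for all $\ell$. Reading this across all transversals, the enclosing hypothesis becomes the statement that no direction $v\in S^{d-1}$ admits a transversal $p_1\in S_1,\dots,p_{d+1}\in S_{d+1}$ with every $\langle v,p_\ell\rangle>0$. Equivalently, setting $U_i=\{v\in S^{d-1}:\langle v,p\rangle<0\text{ for all }p\in S_i\}$ (the directions that push all of $S_i$ to the negative side), every generic direction lies in some $U_i$, so the closures $\overline{U_i}$ cover $S^{d-1}$. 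This covering is the engine of the argument, and I expect setting it up to be the routine part.

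Next I would turn the cover into the hyperplanes. Since $S_i=H_i^-\cap S$ is forced and the $H_i^-$ must cover $\R^d$, every point of $S$ must lie in some $S_i$, so I would first note that the family may be assumed to partition $S$. Each desired halfspace $H_i^-=\{x:\langle n_i,x\rangle\le 0\}$ has its bounding hyperplane through $q$, so the requirement $H_i^-\cap S=S_i$ says precisely that $n_i$ strictly separates $S_i$ (negative side) from $S\setminus S_i$ (positive side); in the language above, $n_i$ must lie in the sign cell of the arrangement of the hyperplanes $\{v:\langle v,p\rangle=0\}$, $p\in S$, on which $S_i$ is negative and the rest positive. The first task is therefore to show this cell is nonempty for every $i$ (realizability of the separation), which amounts to $0\notin\operatorname{conv}(S_i)$ together with separability of $S_i$ from the other colors, and I would extract this from the covering property and general position.

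The second, and I expect harder, task is to choose the normals simultaneously so that the halfspaces cover $\R^d$. By Gordan's theorem $H_1^-\cup\dots\cup H_{d+1}^-=\R^d$ is equivalent to $0\in\operatorname{conv}\{n_1,\dots,n_{d+1}\}$, i.e.\ the chosen normals positively span. So I must select one normal $n_i$ from each of the $d+1$ separation cells so that their convex hull contains the origin. This is exactly a colorful selection problem: from the covering $\bigcup_i\overline{U_i}=S^{d-1}$ one gets $0\in\operatorname{conv}\bigl(\bigcup_i\overline{U_i}\bigr)$ easily (otherwise a single open hemisphere would contain every $U_i$, leaving its antipode uncovered), but upgrading this to a transversal taking one vector of each color — matching the separation cells — is the crux, which I would handle by a colorful Carath\'eodory / positive-basis argument, using general position to ensure the resulting boundaries contain no data point and hence that $H_i^-\cap S=S_i$ holds exactly.

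The main obstacle is thus the simultaneous fulfilment of the two demands on the normals: each $n_i$ must sit in its own separation cell, so that the halfspace captures exactly $S_i$, while collectively the $n_i$ must positively span, so that the halfspaces fill $\R^d$. Decoupling these — first proving each cell is nonempty, then showing a colorful choice across the cells can be made to surround the origin — is where the enclosing hypothesis is genuinely used, and I expect the selection step to carry essentially all of the difficulty.
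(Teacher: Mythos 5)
First, note that the paper itself never proves this statement---it imports it wholesale as Lemma~20 of \cite{patrick} and only remarks on properties of that external proof---so your attempt has to be judged as a standalone proof. As such it is a correct pair of reformulations followed by two unproven claims, and those claims are the entire content of the lemma. The reformulations are fine: enclosingness is indeed equivalent to the covering of the direction sphere by the sets $\overline{U_i}$, and by Gordan's theorem $H^-_1\cup\dots\cup H^-_{d+1}=\R^d$ is equivalent to $0\in\operatorname{conv}\{n_1,\dots,n_{d+1}\}$, so the lemma is exactly the conjunction of your Task~1 (each exact separation cell $B_i$, consisting of directions nonpositive on $S_i$ and strictly positive on $S\setminus S_i$, is nonempty) and Task~2 (a colorful choice $n_i\in B_i$ with $0$ in the hull). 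Neither receives an argument. Task~1 in particular cannot be ``extracted from the covering property'': covering alone does not imply it. Take $S_2=\{s\}$ and $S_3=\{-s\}$ antipodal through $q$, and let $S_1$ be a triangle surrounding $q$. Every transversal contains the segment from $s$ to $-s$, hence $q$, so these sets enclose $q$ and the covering holds; yet no halfspace with $q$ on its boundary even contains $S_1$, so $B_1=\emptyset$. This configuration is excluded by general position, but it shows that Task~1 is not a formal consequence of your covering statement---the proof must intertwine enclosingness with general position, and that argument is missing.

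The second gap is more serious because the tool you name for it cannot apply. Colorful Carath\'eodory requires $0\in\operatorname{conv}(X_i)$ for \emph{every} color class; here that hypothesis fails for \emph{every} class, always: each $B_i$ is contained in the open halfspace $\{v:\langle v,p\rangle>0\}$ for any fixed $p\in S\setminus S_i$, hence $0\notin\operatorname{conv}(B_i)$ (likewise $0\notin\operatorname{conv}(\overline{U_i})$ whenever $S_i$ spans $\R^d$, since its polar cone is then pointed). And, as you half-concede, $0\in\operatorname{conv}\bigl(\bigcup_i\overline{U_i}\bigr)$ is far too weak to give a colorful transversal: already $X_1=\{e_1,-e_1\}$, $X_2=X_3=\{e_2\}$ in the plane kills that implication. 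What a covering by closed convex cones \emph{does} give cheaply is a selection in the full polar cones $A_i\supseteq B_i$: if no $x\in A_i$ has $-x\in A_j$ for some $j\neq i$, then every $A_i$ is antipodally symmetric, hence a linear subspace, and finitely many proper subspaces cannot cover $\R^d$; so one finds an antipodal pair $x\in A_i$, $-x\in A_j$ and completes the transversal arbitrarily. But this only places the normals in the $A_i$, so the resulting halfspaces may swallow points of other colors, violating $H^-_i\cap S=S_i$; and it cannot be upgraded to the exact cells: already for three small clusters placed $120^\circ$ apart in the plane, the cells $B_i$ are three short arcs no two of which contain an antipodal pair, so any valid selection there needs all $d+1$ vectors to cooperate, not one antipodal pair. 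In short, the step you yourself call the crux---one direction per exact cell with the origin in their convex hull---is exactly where the difficulty of the lemma lives, the theorem you invoke for it has hypotheses that provably never hold in this setting, and no substitute argument is given.
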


Denoting by $H^+_i$ the complement of $H^{-}_i$ we show below that we get $S_i\subset\bigcap_{j\neq i}H^+_j$ and $\bigcap_i H^+_i=\{q\}$.
As we also show later, given enclosing sets $S_1,\ldots, S_{d+1}$ we can rotate the halfspaces $H^+_i$ to get closed halfspaces $H'_i$ whose boundaries contain $q$ and $d-1$ points of $S$ and for which $H'_i\cap S=H^+_i\cap S$ and $\bigcap_i H'_i=\{q\}$. On the other hand, given halfspaces $H'_1,\ldots,H'_{d+1}$, each boundary containing $q$ and $d-1$ points of $S$ with $\bigcap_i H'_i=\{q\}$, defining $S_i:=\bigcap_{j\neq i}(H'_j\cap S)$ we will show that for every transversal $p_1\in S_1,...,p_{d+1}\in S_{d+1}$ the point $q$ lies in the convex hull of $p_1,...,p_{d+1}$. Combining these facts, we get the following strengthening of Lemma \ref{lem:halfspaces}.

\begin{lemma}
\label{lem:halfspaces_strong}
Let $S_1,\ldots, S_{d+1}\subset S\subset\R^{d}$ and $q\in\R^{d}$ such that $S\cup\{q\}$ is in general position.
Then $S_1,\ldots,S_{d+1}$ enclose $q$ if and only if there are $d+1$ halfspaces $H'_1,\ldots H'_{d+1}$ whose boundaries contain $q$ and $d-1$ points of $S$, and for which $S_i\subset\bigcap_{j\neq i}H^+_j$ and $\bigcap_i H^+_i=\{q\}$.
\end{lemma}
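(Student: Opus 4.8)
The plan is to prove the two implications separately, after the standard normalization that places $q$ at the origin. Throughout I would write each closed halfspace whose boundary passes through $q$ in the form $\{x:\langle x,n\rangle\le 0\}$, so that $H^+_i=\{x:\langle x,n_i\rangle\le 0\}$ for inner normals $n_i\in\R^d$, while $H^-_i=\{x:\langle x,n_i\rangle\ge 0\}$ is the opposite closed halfspace. The two conditions in the statement then become linear-algebraic conditions on the normals that I would record first: $\bigcap_i H^+_i=\{q\}$ holds if and only if $n_1,\dots,n_{d+1}$ positively span $\R^d$, and by Gordan's theorem this is equivalent to the existence of a nonzero $\lambda\ge 0$ with $\sum_i\lambda_i n_i=0$. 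Establishing these reformulations up front is what makes both directions short.

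\textbf{Forward direction.} Starting from enclosing sets $S_1,\dots,S_{d+1}$, I would invoke Lemma~\ref{lem:halfspaces} to obtain closed halfspaces $H^-_1,\dots,H^-_{d+1}$ with $q$ on every boundary, $H^-_i\cap S=S_i$, and $\bigcup_i H^-_i=\R^d$. Taking $H^+_i$ to be the opposite closed halfspaces, the inclusion $S_i\subset\bigcap_{j\neq i}H^+_j$ is immediate: the $S_j$ are pairwise disjoint, so a point of $S_i$ lies in no $H^-_j$ with $j\neq i$, hence in $\mathrm{int}(H^+_j)$. The identity $\bigcap_i H^+_i=\{q\}$ I would derive from $\bigcup_i H^-_i=\R^d$, which says no point lies strictly inside all $H^+_i$, i.e. the $n_i$ admit a nonnegative dependency; since any $d$ of the $d+1$ normals are independent by general position, this upgrades to positive spanning, which is exactly $\bigcap_i H^+_i=\{q\}$. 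Finally I would carry out the announced rotation: rotating the boundary of each $H^+_i$ about $q$ without sweeping across a point of $S$ leaves $H^+_i\cap S$ unchanged, and I would rotate until $d-1$ points of $S$ lie on the boundary, producing the claimed $H'_i$.

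\textbf{Backward direction.} For the converse I would argue by contradiction. Suppose the halfspaces exist but some transversal $p_1\in S_1,\dots,p_{d+1}\in S_{d+1}$ satisfies $q\notin\mathrm{conv}(p_1,\dots,p_{d+1})$. Then strict separation gives a direction $u$ with $\langle p_i-q,u\rangle>0$ for all $i$. Since $\bigcap_i H^+_i=\{q\}$, the normals positively span, so $u=\sum_j\mu_j n_j$ with all $\mu_j\ge 0$. Using a nonnegative dependency $\sum_j\lambda_j n_j=0$ and subtracting the largest multiple of it that keeps every coefficient nonnegative, I can rewrite $u=\sum_{j\neq j^*}\mu'_j n_j$ with $\mu'_j\ge 0$ for a suitable index $j^*$. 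But $p_{j^*}\in S_{j^*}\subset\bigcap_{j\neq j^*}H^+_j$ gives $\langle p_{j^*}-q,n_j\rangle\le 0$ for every $j\neq j^*$, whence $\langle p_{j^*}-q,u\rangle=\sum_{j\neq j^*}\mu'_j\langle p_{j^*}-q,n_j\rangle\le 0$, contradicting $\langle p_{j^*}-q,u\rangle>0$. Hence $q$ lies in the convex hull of every transversal and $S_1,\dots,S_{d+1}$ enclose $q$.

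\textbf{Expected obstacle.} The one genuinely delicate step is the rotation in the forward direction: I must check that the boundary of each $H^+_i$ can be rotated about $q$ onto $d-1$ points of $S$ while preserving \emph{both} the incidence $H^+_i\cap S$ \emph{and} the global condition $\bigcap_i H^+_i=\{q\}$. Preserving $H^+_i\cap S$ only requires not crossing points of $S$, but $\bigcap_i H^+_i=\{q\}$ couples the halfspaces, so I must argue that the positive-spanning property of the normals survives the rotation (indeed, the analysis above shows this property can fail without the genericity that the rotated boundaries supply). By contrast, the backward direction is the clean combinatorial core and collapses, as shown, to a one-line sign argument once the positive-spanning reformulation is available.
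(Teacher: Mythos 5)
Your backward direction is correct, and it takes a genuinely different route from the paper's. The paper proves this implication (Lemma~\ref{lem:simplex_enclosing}) geometrically: each halfspace is rotated slightly about the axis spanned by its $d-1$ boundary points so that $q$ leaves it, the complements then intersect in a simplex $\Delta$ containing $q$, and every transversal of the resulting cones $C_i \supseteq S_i$ has convex hull containing $\Delta$, hence $q$. Your Gordan-type argument---positively spanning normals, a separating direction $u$, elimination of one index $j^*$ via a nonnegative dependency, and the sign contradiction at $p_{j^*}$---reaches the same conclusion by pure linear algebra. It is rigorous, avoids the paper's informal ``rotate a small amount'', and as a bonus never uses the $d-1$ points of $S$ on the boundaries nor general position. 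This half of your proposal is fine as written, and arguably cleaner than the paper's.

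The forward direction, however, contains a genuine gap, and it is exactly the one you flag under ``Expected obstacle'' without ever closing: you never supply the argument that $\bigcap_i H^+_i=\{q\}$ survives the rotation. This is not a routine verification, and your suspicion that it can fail is correct. In the plane, two of the rotated boundaries can end up being the \emph{same} line through $q$ and the same point $s\in S$, reached from opposite sides; then two normals are antiparallel, the intersection of the three halfspaces is a ray rather than $\{q\}$, and yet no point of $S$ ever crossed a boundary during the rotation. So positive spanning genuinely can be destroyed, and no continuity or genericity hand-wave will restore it. The paper's proof of Lemma~\ref{lem:rotation} avoids tracking this condition through the rotation altogether: since no point of $S$ is removed during the rotation, the containments $S_i\subseteq\bigcap_{j\neq i}H^+_j$ persist, and because the $S_i$ still enclose $q$, Lemma~\ref{lem:trivial_intersection}, applied to any transversal $s_1\in S_1,\ldots,s_{d+1}\in S_{d+1}$, re-derives $\bigcap_i H^+_i=\{q\}$ from the combinatorial data \emph{after} the rotation. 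That idea---re-derive the intersection condition from enclosingness rather than preserve it---is what is missing from your proposal. A related, smaller gap sits earlier in your argument: the claim that ``any $d$ of the $d+1$ normals are independent by general position'' does not follow from general position of $S\cup\{q\}$, since the boundaries produced by Lemma~\ref{lem:halfspaces} need not pass through any point of $S$; the paper needs a separate remark (inspecting the construction in \cite{patrick}) to justify this.

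One caveat that speaks in your favor: the antiparallel configuration above satisfies all the stated hypotheses of Lemma~\ref{lem:trivial_intersection} (boundaries through $q$, and $\bigcap_{j\neq i}H^+_j\cap S=\{s_i\}$ for each $i$) while violating its conclusion, so that lemma implicitly requires the halfspaces to be in general position---its proof asserts this ``must'' hold, which is exactly what fails in the example. So the delicate point you identified is delicate even in the paper: a fully airtight proof of the forward direction must additionally argue that the rotations can be carried out so that no two boundary hyperplanes coincide, after which either your positive-spanning bookkeeping or the paper's Lemma~\ref{lem:trivial_intersection} closes the argument.
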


Let us now formally prove all the steps outlined above.
We say that $d+1$ halfspaces in $\R^d$ are in \emph{general position} if for any $k\leq d$ of them their boundary hyperplanes intersect in a $(d-k)$-dimensional affine subspace. This is equivalent to saying that the normal vectors of the bounding hyperplanes are affinely independent. The halfspaces $H^{-}_1,\ldots,H^{-}_{d+1}$ given by Lemma \ref{lem:halfspaces} are in general position, as an investigation of the proof in \cite{patrick} shows. In fact, this also follows from Lemma \ref{lem:trivial_intersection}. Our first lemma relates the union of halfspaces to their intersection.

The following is well-known, but we state it as it will be helpful for us later.

\begin{lemma}\label{lem:union_vs_intersection}
Let $H_1,\ldots,H_{d+1}$ be $d+1$ closed halfspaces in $\R^d$ in general position, all of which contain a point $q$ on their boundary. Then $\bigcup_i H_i=\R^d$ if and only if $\bigcap_i H_i=\{q\}$.
\end{lemma}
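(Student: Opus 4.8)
The plan is to translate $q$ to the origin and encode the halfspaces by their outward normals, so that both sides of the equivalence become statements about the normal vectors $a_1,\ldots,a_{d+1}$, where $H_i=\{x\in\R^d:\langle a_i,x\rangle\le 0\}$ and the boundary of $H_i$ is $\{x:\langle a_i,x\rangle=0\}$. Under general position, any $d$ of the $a_i$ are linearly independent (this is the $k=d$ instance of the intersection condition). The key reformulation is that a point $x$ lies outside every $H_i$ precisely when $\langle a_i,x\rangle>0$ for all $i$; hence $\bigcup_i H_i=\R^d$ holds if and only if the strict system $\langle a_i,x\rangle>0$ (for all $i$) has no solution. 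On the other side, $\bigcap_i H_i=\{x:\langle a_i,x\rangle\le 0\ \forall i\}$ always contains the origin, and equals $\{q\}$ exactly when this non-strict system forces $x=0$. With this dictionary the lemma becomes a clean statement relating the strict and non-strict systems of the same linear functionals.

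First I would dispatch the direction $\bigcap_i H_i=\{q\}\Rightarrow\bigcup_i H_i=\R^d$, which needs no general position. I argue by contraposition: if $\bigcup_i H_i\neq\R^d$, pick an uncovered point $p$, so $\langle a_i,p\rangle>0$ for all $i$; then $-p\neq 0$ satisfies $\langle a_i,-p\rangle<0\le 0$, so $-p\in\bigcap_i H_i$, witnessing $\bigcap_i H_i\neq\{q\}$.

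The main work, and the step where general position is essential, is the converse $\bigcup_i H_i=\R^d\Rightarrow\bigcap_i H_i=\{q\}$. Again by contraposition, I take $p\neq 0$ with $\langle a_i,p\rangle\le 0$ for all $i$ and must produce a point $x$ with $\langle a_i,x\rangle>0$ for all $i$, i.e. an uncovered point, contradicting $\bigcup_i H_i=\R^d$. The natural candidate is $x=-p$, which already gives $\langle a_i,-p\rangle\ge 0$; the difficulty is that this can be an equality for the indices $Z=\{i:\langle a_i,p\rangle=0\}$, where the normals lie in the hyperplane $p^\perp$. Here general position enters twice: since $p^\perp$ is $(d-1)$-dimensional and any $d$ normals are independent, we must have $|Z|\le d-1$, and the normals $\{a_i:i\in Z\}$ are linearly independent. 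Consequently the linear map $w\mapsto(\langle a_i,w\rangle)_{i\in Z}$ is surjective, so I can choose a direction $w$ with $\langle a_i,w\rangle>0$ for every $i\in Z$, and then set $x=-p+\varepsilon w$ for sufficiently small $\varepsilon>0$: the strict inequalities on $Z$ are created by the $\varepsilon w$ term, while the already-strict inequalities on the remaining indices are preserved. I expect this perturbation argument to be the crux; the rest is the routine bookkeeping of choosing $\varepsilon$ small enough. As an alternative, one could phrase the whole equivalence through polar cones and Gordan's theorem, identifying $\bigcap_i H_i$ with the polar of $\mathrm{cone}(a_1,\ldots,a_{d+1})$ and $\bigcup_i H_i=\R^d$ with $0\in\mathrm{conv}(a_1,\ldots,a_{d+1})$, but the direct perturbation keeps the proof self-contained.
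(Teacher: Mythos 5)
Your proof is correct, but there is nothing in the paper to compare it against: the paper states this lemma as ``well-known'' and provides no proof at all. Your argument therefore fills a genuine gap in the exposition rather than duplicating or diverging from an existing one. Both directions check out. The easy direction (intersection $=\{q\}$ implies union $=\R^d$) indeed needs no general position: an uncovered point $p$ gives $\langle a_i,p\rangle>0$ for all $i$, so $-p$ is a nonzero point of the intersection. Your perturbation argument for the converse is also sound, and each step is justified: since any $d$ of the normals are linearly independent (the $k=d$ instance of the paper's definition of general position for halfspaces), at most $d-1$ of them can lie in the $(d-1)$-dimensional subspace $p^\perp$, so the active set $Z=\{i:\langle a_i,p\rangle=0\}$ has size at most $d-1$ and its normals are linearly independent; surjectivity of $w\mapsto(\langle a_i,w\rangle)_{i\in Z}$ then yields a direction $w$ with $\langle a_i,w\rangle>0$ on $Z$, and taking $\varepsilon$ smaller than $\min\{\langle a_i,-p\rangle/|\langle a_i,w\rangle| : i\notin Z,\ \langle a_i,w\rangle<0\}$ makes $x=-p+\varepsilon w$ uncovered, contradicting $\bigcup_i H_i=\R^d$. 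A side benefit of your write-up is that it pinpoints exactly where general position is used, which a bare citation cannot do; the alternative you sketch via polar cones and Gordan's theorem is the form in which this fact is usually quoted, so either version would serve as a legitimate replacement for the proof the paper omits.
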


The next lemma relates the hyperplanes $H^{-}_i$ of Lemma \ref{lem:halfspaces} to their complements $H^+_i$.

\begin{lemma}\label{lem:reformulation}
Let $S=S_1\cup\ldots\cup S_{d+1}$ and let $H^{-}_1,\ldots,H^{-}_{d+1}$ be closed halfspaces in general position whose boundaries all contain $q$. Denote by $H^+_1,\ldots H^+_{d+1}$ the closures of their complements.
We have that $H^{-}_i\cap S=S_i$ for every $i\in\{1,\ldots,d+1\}$ and $\bigcup_i H^{-}_i=\R^d$ if and only if $\bigcap_{j\neq i}H^+_j\cap S=S_i$ for every $i\in\{1,\ldots,d+1\}$ and $\bigcap_i H^+_i=\{q\}$.
\end{lemma}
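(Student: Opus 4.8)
The plan is to first reduce the biconditional to a cleaner statement and then settle it with a sign-vector analysis of the arrangement of the $d+1$ boundary hyperplanes. Without loss of generality I take $q$ to be the origin, so that each $H^{-}_i$ and its complement $H^+_i$ share a boundary hyperplane through the origin. Each side of the claimed equivalence consists of a \emph{set-membership} condition ($H^{-}_i\cap S=S_i$ versus $\bigcap_{j\neq i}H^+_j\cap S=S_i$) together with a \emph{covering} condition ($\bigcup_i H^{-}_i=\R^d$ versus $\bigcap_i H^+_i=\{q\}$). I would first dispose of the covering conditions: the point reflection $x\mapsto -x$ through the origin maps each $H^{-}_i$ onto $H^+_i$ and is a bijection of $\R^d$, so $\bigcup_i H^{-}_i=\R^d$ if and only if $\bigcup_i H^+_i=\R^d$; applying Lemma~\ref{lem:union_vs_intersection} to the halfspaces $H^+_1,\ldots,H^+_{d+1}$ (which are again in general position and contain $q$ on their boundaries) turns the latter into $\bigcap_i H^+_i=\{q\}$. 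Hence the two covering conditions are equivalent regardless of $S$, and it remains to prove that, assuming they hold, the two set-membership conditions are equivalent.

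The geometric heart of the argument is the inclusion $\bigcap_{j\neq i}H^+_j\subseteq H^{-}_i$, with the sharper statement that every point of $\bigcap_{j\neq i}H^+_j$ other than $q$ lies in the interior of $H^{-}_i$. I would prove this directly: if $x\in\bigcap_{j\neq i}H^+_j$ were to lie in the interior of $H^+_i$, then $x$ would belong to $\bigcap_i H^+_i=\{q\}$, forcing $x=q$; so any $x\neq q$ in $\bigcap_{j\neq i}H^+_j$ lies outside $\mathrm{int}(H^+_i)$, i.e.\ in $H^{-}_i$. Intersecting with $S$ immediately yields $\bigcap_{j\neq i}H^+_j\cap S\subseteq H^{-}_i\cap S$, which is one of the two inclusions needed in each direction.

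For the forward direction I assume $H^{-}_i\cap S=S_i$ for all $i$ and use that the $S_i$ are pairwise disjoint (as in the definition of enclosing). Given $s\in S_i$ and $j\neq i$, disjointness gives $s\notin S_j=H^{-}_j\cap S$, so $s\notin H^{-}_j$ and hence $s\in H^+_j$; thus $s\in\bigcap_{j\neq i}H^+_j$, proving $S_i\subseteq\bigcap_{j\neq i}H^+_j\cap S$ and, with the inclusion above, equality. For the converse I set $S_i:=\bigcap_{j\neq i}H^+_j\cap S$ and assign to each $s\in S$ (which by general position lies on no boundary hyperplane) its sign vector recording whether $s\in\mathrm{int}(H^+_i)$ or $s\in\mathrm{int}(H^{-}_i)$. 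A point lies in $\bigcap_{j\neq i}H^+_j$ exactly when its sign vector has a minus sign only possibly in coordinate $i$; the condition $\bigcap_i H^+_i=\{q\}$ rules out the all-plus vector, while the lemma's premise $S=S_1\cup\cdots\cup S_{d+1}$ forces each $s\in S$ to lie in at least one region $\bigcap_{j\neq i}H^+_j$. Together these show every $s\in S$ has exactly one minus sign, so the coordinate of that unique minus sign identifies the unique index $i$ with $s\in S_i$, and $s\in H^{-}_i$ holds precisely for that same index; this gives $H^{-}_i\cap S=S_i$ for every $i$.

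The step I expect to require the most care is the sign-vector bookkeeping in the converse: I must rule out points of $S$ with no minus sign (which would lie in $\bigcap_i H^+_i=\{q\}$) and points with two or more minus signs (which would then lie in none of the regions $\bigcap_{j\neq i}H^+_j$, contradicting $S=\bigcup_i S_i$). Both are consequences of the stated hypotheses once one also invokes general position of $S\cup\{q\}$ to keep the points of $S$ off the boundary hyperplanes, so the main obstacle is the careful combination of these standing assumptions rather than any deep new idea.
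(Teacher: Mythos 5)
Your proof is correct, and while it follows the paper's overall decomposition into a covering part and a set-membership part (with Lemma~\ref{lem:union_vs_intersection} carrying the covering part in both cases), the execution differs in both halves. For the covering equivalence, the paper shows $\bigcap_i H^+_i=\{q\}$ if and only if $\bigcap_i H^{-}_i=\{q\}$ by a cone argument ($\bigcap_i H^{-}_i$ is either $\{q\}$ or a cone with apex $q$, and an interior point of the cone would lie outside every $H^+_i$, contradicting Lemma~\ref{lem:union_vs_intersection}); your observation that reflection through $q$ exchanges each $H^{-}_i$ with $H^+_i$ yields the same equivalence in one line and is arguably cleaner. For the set-membership part, the paper offers only the sentence that it ``follows immediately from the fact that $H^+_i$ is the complement of $H^{-}_i$,'' whereas you supply the argument that is actually needed: the inclusion $\bigcap_{j\neq i}H^+_j\subseteq H^{-}_i$ (which is not formal complementation---it uses $\bigcap_i H^+_i=\{q\}$), pairwise disjointness of the $S_i$ in the forward direction, and avoidance of the boundary hyperplanes by the points of $S$ in the converse. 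This extra care is not pedantry: both added hypotheses are necessary, since otherwise there are counterexamples (in the plane, a point $s$ interior to $H^{-}_1\cap H^{-}_2$ with $S_1=S_2=\{s\}$ and $S_3=\emptyset$ breaks the forward direction, and a point $s$ on $\partial H^{-}_1$ but interior to $H^{-}_2$ breaks the converse), so the lemma is true only under the reading you adopt---which is indeed the one in force in the paper's applications, where the $S_i$ form an enclosing partition. Two cosmetic remarks: your ``sharper'' claim that points of $\bigcap_{j\neq i}H^+_j$ other than $q$ lie in the \emph{interior} of $H^{-}_i$ is asserted but your argument only establishes membership in $H^{-}_i$ (one extra line fixes this, and membership is all you use); and you should flag explicitly that disjointness of the $S_i$ and boundary-avoidance by $S$ are hypotheses you are adding, since neither follows from the lemma's literal statement, where ``general position'' refers only to the halfspaces.
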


\begin{proof}
We first show that $\bigcap_i H^+_i=\{q\}$ if and only if $\bigcap_i H^{-}_i=\{q\}$. Assume that $\bigcap_i H^+_i=\{q\}$. By definition we have $q\in\bigcap_i H^{-}_i$. It remains to show that there is no other point in the intersection. For this, we note that the intersection of $d+1$ closed halfspaces in general position in $\R^d$ is a convex set and as all boundary hyperplanes contain $q$ it is either only $q$ or a cone with apex $q$. However, if it is a cone then any point in the interior of the cone must lie in the interior of $H^{-}_i$ for every $i$, so it does not lie in $H^+_i$ for any $i$, which means that $\bigcup_i H^+_i\neq\R^d$. By Lemma \ref{lem:union_vs_intersection}, this is a contradiction to $\bigcap_i H^+_i=\{q\}$. The other direction is analogous. In particular, by Lemma \ref{lem:union_vs_intersection} we get that $\bigcup_i H^{-}_i=\R^d$ if and only if $\bigcap_i H^+_i=\{q\}$.

It remains to show that $H^{-}_i\cap S=S_i$ for every $i\in\{1,\ldots,d+1\}$ if and only if $\bigcap_{j\neq i}H^+_j\cap S=S_i$. This follows immediately from the fact that $H^+_i$ is the complement of $H^{-}_i$.
\end{proof}

The next lemma shows that the condition $\bigcap_i H^+_i=\{q\}$ is an immediate consequence of enclosingness.

\begin{lemma}\label{lem:trivial_intersection}
Let $s_1,\ldots,s_{d+1},q$ be $d+2$ points in $\R^d$ in general position such that $q$ lies in the convex hull of $S=\{s_1,\ldots,s_{d+1}\}$. Let $H^+_1,\ldots,H^+_{d+1}$ be closed halfspaces whose boundaries contain $q$ and such that $\bigcap_{j\neq i}H^+_j\cap S=s_i$ for every $i\in\{1,\ldots,d+1\}$. Then $\bigcap_i H^+_i=\{q\}$.
\end{lemma}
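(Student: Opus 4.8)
The plan is to convert the claim into a statement about the \emph{union} of the halfspaces and then invoke Lemma~\ref{lem:union_vs_intersection}. First I would translate so that $q$ is the origin and set $v_i=s_i-q$. The hypothesis $\bigcap_{j\neq i}H^+_j\cap S=\{s_i\}$ in particular forces $s_a\in H^+_b$ for every ordered pair $a\neq b$: indeed $s_i$ lies in $\bigcap_{j\neq i}H^+_j$, so $s_i\in H^+_j$ for all $j\neq i$. Writing each halfspace as $H^+_i=\{x:\langle n_i,x\rangle\ge 0\}$ with its bounding hyperplane through the origin, this says $\langle n_i,v_j\rangle\ge 0$ for all $j\neq i$. (The complementary strict inequalities $\langle n_i,v_i\rangle<0$ also follow from the hypothesis, but I will not need them.)

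The key geometric step is to cover $\R^d$ by the ``facet cones'' of the simplex. Since $H^+_i$ is a closed halfspace whose boundary passes through the origin, it is a convex cone, and as it contains each $v_j$ with $j\neq i$ it contains the cone $C_i:=\mathrm{cone}\{v_j:j\neq i\}$ generated by the remaining $d$ points. I would then show $\bigcup_i C_i=\R^d$. Because $q$ lies in the convex hull of the $s_i$ and the $d+2$ points are in general position, $q$ cannot lie on any facet and hence lies in the \emph{interior} of the simplex; thus the $v_i$ positively span $\R^d$ and there is a dependence $\sum_i\lambda_i v_i=0$ with all $\lambda_i>0$. Given any $u\in\R^d$, write $u=\sum_i c_i v_i$ with $c_i\ge 0$ (possible by positive spanning) and subtract $t^{*}\sum_i\lambda_i v_i=0$ with $t^{*}=\min_i c_i/\lambda_i$: this keeps all coefficients nonnegative while driving one of them to zero, exhibiting $u\in C_i$ for some $i$. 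Hence $\R^d=\bigcup_i C_i\subseteq\bigcup_i H^+_i$, so $\bigcup_i H^+_i=\R^d$.

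Finally, I would apply Lemma~\ref{lem:union_vs_intersection} to the $H^+_i$: these are $d+1$ closed halfspaces in general position, all containing $q$ on their boundary, so $\bigcup_i H^+_i=\R^d$ is equivalent to $\bigcap_i H^+_i=\{q\}$, which is exactly the claim.

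I expect the main obstacle to be the covering step $\bigcup_i C_i=\R^d$, as this is where the hypothesis that $q$ lies in the \emph{interior} of the simplex (not merely in its convex hull) is genuinely used, through the strictly positive dependence $\sum_i\lambda_i v_i=0$. A secondary but essential point is that the concluding implication really does rely on the halfspaces being in general position, the standing assumption of this section: without it one can arrange $\bigcup_i H^+_i=\R^d$ while $\bigcap_i H^+_i$ is strictly larger than $\{q\}$ (for instance a ray), so Lemma~\ref{lem:union_vs_intersection} cannot be bypassed.
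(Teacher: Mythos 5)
Your overall route is the same as the paper's: show that the halfspaces cover $\R^d$, then apply Lemma~\ref{lem:union_vs_intersection}. The paper's (very terse) covering step is that each $H^+_i$ contains the facet $F_i$ of the simplex opposite $s_i$, hence, being a cone with apex $q$, it contains the cone over $F_i$ from $q$, and these facet cones cover $\R^d$ because $q$ is interior to the simplex. Your $C_i=\mathrm{cone}\{v_j : j\neq i\}$ is exactly that facet cone, and your positive-spanning computation (the dependence $\sum_i\lambda_i v_i=0$ with $\lambda_i>0$, plus the subtraction trick killing one coefficient) is a correct and fully detailed version of the step the paper leaves implicit. Up to that point your argument is fine, indeed more rigorous than the paper's.

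The gap is in the final step, and the paper's own proof shares it. You license the application of Lemma~\ref{lem:union_vs_intersection} by declaring the $H^+_i$ to be in general position, calling this ``the standing assumption of this section.'' No such standing assumption exists: general position of halfspaces is an explicit hypothesis in Lemmas~\ref{lem:union_vs_intersection} and~\ref{lem:reformulation}, but it is not assumed in Lemma~\ref{lem:trivial_intersection}, nor anywhere globally. (The paper instead asserts that the $H^+_i$ ``must be in general position,'' also without argument.) Moreover, it is not derivable from the stated hypotheses: take $d=2$, $q$ the origin, $s_1=(-1,-1)$, $s_2=(1,-1)$, $s_3=(0,1)$, and $H^+_1=\{x\geq 0\}$, $H^+_2=\{x\leq 0\}$, $H^+_3=\{y\leq 0\}$. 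The points are in general position, $q$ is interior to the triangle, the boundaries all contain $q$, and $\bigcap_{j\neq i}H^+_j\cap S=\{s_i\}$ for each $i$; yet $\bigcap_i H^+_i$ is the ray $\{x=0,\ y\leq 0\}$ --- precisely the failure mode you yourself predicted in your last paragraph. So your closing observation is exactly right that general position is essential and cannot be bypassed; the trouble is that it is neither assumed nor provable here, so as written both your proof and the paper's establish only the variant of the lemma in which general position of the halfspaces is added as a hypothesis, and the lemma as literally stated is false. Any repair must either add that hypothesis to the statement or verify it in the context where the lemma is invoked (the rotation argument of Lemma~\ref{lem:rotation}).
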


\begin{proof}
Consider the simplex spanned by $s_1,\ldots,s_{d+1}$ and denote by $F_i$ the facet spanned by all points except $s_i$. Note that $H^+_i$ contains $F_i$ and that the $H^+_i$ must be in general position. This implies that $\bigcup_i H^+_i=\R^d$. The statement not follows from Lemma \ref{lem:union_vs_intersection}.
\end{proof}

We are now ready to show that we can rotate the $H^+_i$ to not only contain $q$ on the boundary but also $(d-1)$ points of $S$.

\begin{lemma}\label{lem:rotation}
Let $S_1,\ldots, S_{d+1}\subset S\subset\R^d$ be point sets which enclose a point $q$, where $S\cup\{q\}$ is in general position.
Then there are $d+1$ closed halfspaces $H^{+}_1,\ldots,H^+_{d+1}$ such that each $H^{+}_i$ contains $q$ and $d-1$ points of $S$ on its boundary, $S_i\subseteq\bigcap_{j\neq i} H^{+}_i\cap S$ for each $i$ and $\bigcap_i H^+_i=\{q\}$.
\end{lemma}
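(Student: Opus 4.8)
The plan is to start from the halfspaces provided by Lemma~\ref{lem:halfspaces} and reformulated via Lemma~\ref{lem:reformulation}: these give closed halfspaces $H^+_1,\dots,H^+_{d+1}$ with outward normals $\mu_1,\dots,\mu_{d+1}$, each boundary passing through $q$, with $S\setminus S_i\subseteq\operatorname{int}(H^+_i)$ (equivalently $\bigcap_{j\neq i}H^+_j\cap S=S_i$) and $\bigcap_i H^+_i=\{q\}$. First I would record that $|S\setminus S_i|=\sum_{j\neq i}|S_j|\ge d$, so that together with general position the vectors $\{s-q:\,s\in S\setminus S_i\}$ span $\R^d$.

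The rotation is then carried out independently for each $i$. I would consider the cone $C_i=\{\nu:\langle s-q,\nu\rangle\le 0\text{ for all }s\in S\setminus S_i\}$. Since $\mu_i$ satisfies every defining inequality strictly, $\mu_i\in\operatorname{int}(C_i)$, so $C_i$ is full-dimensional; since its generators span $\R^d$, $C_i$ is pointed and hence has extreme rays. I would pick any extreme ray $\mu'_i$ of $C_i$ and let the new $H^+_i$ be the closed halfspace with outward normal $\mu'_i$ and $q$ on its boundary. An extreme ray of a pointed full-dimensional cone in $\R^d$ lies on exactly $d-1$ facet hyperplanes (lying on more would place $d$ points of $S$ together with $q$ on a common hyperplane, contradicting general position), so $\partial H^+_i$ contains $q$ and exactly $d-1$ points of $S$, while $\mu'_i\in C_i$ keeps $S\setminus S_i\subseteq H^+_i$. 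The containment $S_i\subseteq\bigcap_{j\neq i}H^+_j$ is then immediate, since for $j\neq i$ we have $S_i\subseteq S\setminus S_j\subseteq H^+_j$.

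It remains to re-establish $\bigcap_i H^+_i=\{q\}$ after these set-by-set rotations, which I expect to be the main obstacle. I would deduce it from Lemma~\ref{lem:trivial_intersection} applied to a carefully chosen transversal, the crux being a sublemma: for each $k$ there is a point $p_k\in S_k$ lying strictly in the interior of the complement of $H^+_k$. Indeed, if all of $S_k$ lay in $H^+_k$, then (as $S\setminus S_k\subseteq H^+_k$ already) the whole of $S$ would lie in the closed halfspace $H^+_k$ with $q$ on its boundary; for any transversal, writing $q=\sum\lambda_i p_i$ would force every active $p_i$ onto $\partial H^+_k$, but $\partial H^+_k$ meets $S$ in only $d-1$ points, and $q$ together with any at most $d-1$ points of $S$ is affinely independent by general position, so $q$ could not be a convex combination of them, contradicting that $S_1,\dots,S_{d+1}$ enclose $q$.

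Choosing such a $p_k$ for every $k$ yields a transversal $p_1,\dots,p_{d+1}$ with $q\in\operatorname{conv}(p_1,\dots,p_{d+1})$ (by enclosingness) and $\bigcap_{j\neq i}H^+_j\cap\{p_1,\dots,p_{d+1}\}=\{p_i\}$: the point $p_i$ lies in each $H^+_j$ with $j\neq i$ as above, while for $k\neq i$ the point $p_k$ is excluded by $H^+_k$ itself. Lemma~\ref{lem:trivial_intersection} then gives $\bigcap_i H^+_i=\{q\}$, completing the proof. The only remaining points needing care are the general-position bookkeeping (pointedness of $C_i$ and the ``exactly $d-1$'' count) and the verification that the chosen transversal certifies the hypotheses of Lemma~\ref{lem:trivial_intersection}.
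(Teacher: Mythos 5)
Your proof is correct, and it follows the same skeleton as the paper's: start from the complements of the halfspaces of Lemma~\ref{lem:halfspaces}, modify each one so that its boundary picks up $d-1$ points of $S$ while $S\setminus S_i$ stays inside, and conclude $\bigcap_i H^+_i=\{q\}$ via Lemma~\ref{lem:trivial_intersection}. The execution differs in two ways worth noting. First, where the paper performs an iterative continuous rotation of $\partial H^+_i$ about $(d-2)$-dimensional axes through $q$, you take an extreme ray of the polyhedral cone $C_i$ of admissible outward normals; this is a one-shot, discrete version of the same idea, and it has the advantage that the $d-1$ boundary points automatically come from $S\setminus S_i$, since they are tight constraints of $C_i$. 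In the paper's version it is not even clear that the sweeping hyperplane does not first hit a point of $S_i$; this matters, because a point $p\in S_i$ on $\partial H^+_i$ would lie in every $H^+_j$ (for $j\neq i$ it lies in $S\setminus S_j\subseteq H^+_j$), so $p\in\bigcap_j H^+_j$ would destroy the conclusion $\bigcap_j H^+_j=\{q\}$; ruling this out requires an extra argument via enclosingness that the paper omits. Second, and more importantly, the paper simply asserts ``it follows from Lemma~\ref{lem:trivial_intersection} that $\bigcap_i H^+_i=\{q\}$'' without exhibiting a transversal satisfying that lemma's hypotheses. Your sublemma --- each $S_k$ contains a point strictly outside $H^+_k$, since otherwise all of $S$ would lie in $H^+_k$ and $q$ would have to be a convex combination of the at most $d-1$ points of $S$ on $\partial H^+_k$, contradicting general position --- together with your verification that the resulting transversal satisfies $\bigcap_{j\neq i}H^+_j\cap\{p_1,\ldots,p_{d+1}\}=\{p_i\}$, is exactly the missing justification. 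So your write-up is not only correct but closes genuine gaps in the paper's own argument; the only items to keep explicit are the standard polyhedral facts you already flagged (pointedness of $C_i$ from general position, existence of extreme rays, and the rank-$(d-1)$ characterization of extreme rays).
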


\begin{proof}
We start with the halfspaces $H^+_1,\ldots,H^+_{d-1}$, which are the complements of the halfspaces given by Lemma \ref{lem:halfspaces}.
Assume the boundary of $H^+_i$ contains $q$ but fewer than $d-1$ points of $S$. In particular, it contains fewer than $d$ points, so we can choose a $(d-2)$-dimensional axis of rotation through $q$ and all points of $S$ it contains. We rotate $H^+_i$ around this axis until it hits some other point, resulting in a rotated halfspace with one more point on the boundary. Also note that no point has been removed from $H^+_i$ during the rotation, so we maintain $S_i\subseteq\bigcap_{j\neq i} H^{+}_i\cap S$ for each $i$. Further it follows from Lemma \ref{lem:trivial_intersection} that $\bigcap_i H^+_i=\{q\}$. The lemma now follows by repeating this process until all boundaries contain the required number of points.
\end{proof}

Finally, we show the other direction of Lemma \ref{lem:halfspaces_strong}.

\begin{lemma}\label{lem:simplex_enclosing}
Let $S$ be a set of points in $\R^d$ and $q$ a query point, such that $S\cup\{q\}$ is in general position. Let $H^{+}_1,\ldots,H^+_{d+1}$ be closed halfspaces such that each $H^{+}_i$ contains $q$ and $d-1$ points of $S$ on its boundary and $\bigcap_i H^+_i=\{q\}$. Let $S_i=\bigcap_{j\neq i} H^{+}_i\cap S$. Then $S_1,\ldots, S_{d+1}$ enclose $q$.
\end{lemma}

\begin{proof}
For each $H^+_i$ consider the $(d-2)$-dimensional rotation axis spanned by the $d-1$ points on its boundary and rotate a small amount such that $q$ is not in $H^+_i$ anymore. The resulting halfspaces are still in general position but they now have an empty intersection and thus their complements intersect in a simplex $\Delta$ containing $q$. Let $v_i$ be the vertex of $\Delta$ that is the intersection of all bounding hyperplanes except the one of $H^+_i$. Then $\bigcap_{j\neq i} H^{+}_i$ is a cone $C_i$ with apex $v_i$ which contains $S_i$. As we have that for every choice $p_1\in C_1,\ldots,p_{d+1}\in C_{d+1}$ that the convex hull of $p_1,\ldots,p_{d+1}$ contains $\Delta$ it follows that Then $S_1,\ldots, S_{d+1}$ enclose $q$.
\end{proof}

\textbf{Description of the algorithm:}
For each choice of $d-1$ points in $S$, consider the two halfspaces defined by the hyperplane through $q$ and these $d-1$ points. This defines a set $\mathcal{H}$ of $2\binom{n}{d-1}$ halfspaces. For any $d+1$ halfspaces $H_1,\ldots,H_{d+1}\in\mathcal{H}$ first check if $\bigcap_i H_i=\{q\}$. If not, continue with the next choice, otherwise count for each $i$ the number of points of $S$ in $\bigcap_{j\neq i} H_j$ and denote by $k$ the smallest of the $d+1$ numbers. In the end, return the largest such $k$ encountered over all choices of $d+1$ halfspaces in $\mathcal{H}$.

\begin{theorem}
The above algorithm computes the enclosing depth of $q$ with respect to $S\subset\R^d$ in time $O(n^{d^2})$.
\end{theorem}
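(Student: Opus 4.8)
The plan is to prove correctness and runtime separately, where correctness follows almost directly from the strengthened characterization in Lemma \ref{lem:halfspaces_strong}. First I would establish the correctness direction. The algorithm enumerates all halfspaces whose boundary passes through $q$ and exactly $d-1$ points of $S$, which is precisely the family of halfspaces appearing in Lemma \ref{lem:halfspaces_strong}. For any choice of $d+1$ such halfspaces $H_1,\ldots,H_{d+1}$ with $\bigcap_i H_i=\{q\}$, setting $S_i=\bigcap_{j\neq i}H_j\cap S$ yields enclosing sets by Lemma \ref{lem:simplex_enclosing}, and conversely every enclosing configuration arises this way by Lemma \ref{lem:rotation}. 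Thus I would argue that the maximum over all valid choices of the quantity $k=\min_i |S_i|$ equals $ED(S,q)$: any configuration the algorithm finds certifies enclosing depth at least $k$, and the optimal enclosing sets are detected by the corresponding rotated halfspaces.

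Second I would verify the runtime bound. The number of halfspaces in $\mathcal{H}$ is $2\binom{n}{d-1}=O(n^{d-1})$, since $d$ is constant. The algorithm iterates over all unordered choices of $d+1$ halfspaces from $\mathcal{H}$, giving $O((n^{d-1})^{d+1})=O(n^{(d-1)(d+1)})=O(n^{d^2-1})$ choices. For each choice, I would note that checking $\bigcap_i H_i=\{q\}$ reduces to testing whether $d+1$ halfspaces through $q$ in general position have trivial intersection, which can be done by examining their normal vectors in $O(1)$ time for constant $d$. Counting the points of $S$ in each $\bigcap_{j\neq i}H_j$ naively costs $O(n)$ per choice, as one scans all $n$ points and tests membership in the relevant intersection of $d$ halfspaces. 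This gives an overall bound of $O(n^{d^2-1}\cdot n)=O(n^{d^2})$.

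The main subtlety I expect is the bookkeeping in the correctness argument, specifically matching the sets $S_i$ produced by the algorithm (as intersections $\bigcap_{j\neq i}H_j\cap S$) with the enclosing sets guaranteed by the lemmas, since Lemma \ref{lem:rotation} only gives the containment $S_i\subseteq\bigcap_{j\neq i}H_j^+\cap S$ rather than equality. I would address this by observing that if $S_1,\ldots,S_{d+1}$ enclose $q$ with $\min_i|S_i|=ED(S,q)$, then the larger sets $\bigcap_{j\neq i}H_j^+\cap S$ also enclose $q$ by Lemma \ref{lem:simplex_enclosing}, so the algorithm's reported value is at least $ED(S,q)$; the reverse inequality holds because every configuration the algorithm accepts genuinely encloses $q$. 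Hence taking the minimum of the counts and then the maximum over all choices is exactly the correct optimization, and the two bounds together give equality with $ED(S,q)$.
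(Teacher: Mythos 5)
Your proposal is correct and takes essentially the same approach as the paper: correctness via Lemma \ref{lem:halfspaces_strong} (i.e., via Lemmas \ref{lem:rotation} and \ref{lem:simplex_enclosing}) and the identical counting argument $O(n^{(d-1)(d+1)+1})=O(n^{d^2})$ for the runtime. In fact, your explicit two-sided argument handling the containment $S_i\subseteq\bigcap_{j\neq i}H^+_j\cap S$ versus equality is more careful than the paper's one-line correctness claim.
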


\begin{proof}
The correctness follows from Lemma \ref{lem:halfspaces_strong}. As for the runtime, we have $|\mathcal{H}|\in O(n^{d-1})$ out of which we choose sets of $d+1$ elements, so there are $O(n^{(d-1)(d+1})$ sets of halfspaces that we consider. For each set the check and the counting can be done in time $O(n)$ so the total runtime is $O(n^{(d-1)(d+1)+1})=O(n^{d^2})$. 
\end{proof}

\section{Conclusion}
We have given two algorithms to compute the enclosing depth of a query point $q$ with respect to a data point set $S$, one for the plane and one in general dimension. The planar algorithm matches the runtimes of computing many other depth measures. For some measures, such as Tukey depth, matching lower bounds for the computation time have been shown \cite{aloupis2002lower}. It would be interesting to adapt these lower bounds to enclosing depth.

In higher dimension, many depth measures can be computed in time $O(n^{d-1})$, which is significantly faster than the runtime of our algorithm. We believe that enclosing depth can be computed more efficiently as well, but some additional ideas are likely required for this.

Finally, there are other natural algorithmic problems for depth measures even in the plane, such as computing the depth of the entire plane or finding a deepest query point. Using our algorithm and the fact that the arrangement spanned by the lines through all pairs of data points has $O(n^4)$ cells we can solve both those problems in $O(n^5\log n)$, but we again believe that this is not optimal.


\bibliography{refs}

\end{document}